\newtheorem{corollary}{Corollary}[section]
\newtheorem{lemma}{Lemma}
\newtheorem*{lemma*}{Lemma}
\newtheorem{theorem}{Theorem}
\newtheorem*{theorem*}{Theorem}
\newtheorem{observation}{Observation}
\newtheorem*{observation*}{Observation}
\definecolor{charcoal}{rgb}{0.21, 0.27, 0.31}
\title{Partisan affect and political outsiders}
\author[]{Fernanda Herrera\thanks{I gratefully acknowledge financial support from the University of California Institute for Mexico and the United States UC MEXUS, and Mexico's National Council of Science and Technology CONACYT, through FONCICYT. I also wish to express my gratitude to David Cantala for the helpful suggestions during the preparation of the paper.}}
\affil[]{School of Global Policy and Strategy, University of California, San Diego}
\affil[ ]{\normalsize{fherrera@ucsd.edu}}
\date{}
\providecommand{\keywords}[1]{\textbf{{Keywords}} #1}
\begin{document}
	
\maketitle

\begin{abstract}
We examine the effects of introducing a political outsider to the nomination process leading to an election. To this end, we develop a sequential game where politicians -insiders and outsiders- make a platform offer to a party, and parties in turn decide which offer to accept; this process conforms the voting ballot. Embedded in the evaluation of a party-candidate match is partisan affect, a variable comprising the attitudes of voters towards the party. Partisan affect may bias the electorate's appraisal of a match in a positive or negative way. We characterize the conditions that lead to the nomination of an outsider and determine whether her introduction as a potential candidate has any effect on the winning policy and on the welfare of voters. We find that the victory of an outsider generally leads to policy polarization, and that partisan affect has a more significant effect on welfare than ideology extremism. 
\end{abstract}

\keywords{Elections Political outsider Incumbency advantage Rent seeking Voter welfare}\\

{\bf JEL Classification} D71 D72 P16

\section{Introduction}
While individuals with no political experience, henceforth outsiders or newcomers, are not new to the political arena, their surge and increasingly frequent success in congressional and presidential elections poses interesting questions from the perspective of voters, politicians and parties. Why do voters elect individuals with no proven governing skills, particularly for high office? What makes citizens want to leave the market sector and enter politics? And what makes parties endorse outsiders over members from their own pool? 

The value of political experience in electoral performance has led to a vast literature on the benefits of serving in office and tenure, specifically in the form of incumbency advantage. See Erikson (1972) and Born (1979) for an early account on the subject, and Alford and Hibbing (1981) for a refinement of electoral career patterns. The discussion on incumbency advantage has mainly revolved around what constitutes the advantage itself, whether it is attributable to candidate quality or that of the challengers, officeholder benefits or constituency service, and how to best measure it (Levitt and Wolfram, 1997; Cox and Katz, 1996; King, 1991; Gelman and King, 1990). Although being an incumbent can be a determining factor for achieving electoral success, a less explored but equally fundamental question is what accounts for the success of candidates with no prior political experience? 

The issue has gained relevance in recent times, since the emergence of a pattern of victories of newcomers -in many cases over well known politicians with consolidated careers- from what used to be a handful of noticeable anecdotes. Porter and Treul (2020) find that from 2010 to 2018, inexperienced candidates have been more successful than ever before, particularly in the period from 1980 to 2014, in winning primary elections for the U.S. House of Representatives, both in incumbent and non-incumbent seat races.
Further demonstrating the threat that newcomers constitute, Roberds and Roberts (2002) find that in House elections between 1992 and 1998, incumbent politicians spent nearly the same amount campaigning against quality challengers as they did running with competitive amateurs.\footnote{Their definition of quality challenger is a politician with previous elective office experience. They follow Canon (1993) in defining a competitive amateur as a celebrity, a repeat challenger, or a candidate that defeated at least one quality candidate in the primary election.} The consolidation of newcomers as serious and competitive challengers has derived in an incipient strand of literature addressing the seeming change in the preference of voters for (in)experience, the qualities and motivations of outsiders, their previous professional experience, and aims and intentions for entering the political sector (Hetherington and Rudolph, 2015; Porter and Treul, 2020; King, 2002; Mattozzi and Merlo, 2008).

Among the factors explaining the election of inexperienced leaders, we highlight the electorate's mistrust or discontent towards career politicians or the government, which serve only to increase the appeal of newcomers. Carreras (2017) shows that newcomers have electoral success when macroeconomic conditions are bad and when there is a high level of corruption. Popular discontent, derived from economic or political crisis, not only increases the likelihood of victory of outsiders, it also escalates their extremism (Serra, 2018). While this sheds light on the question of why voters may elect outsiders, it leaves the issue of why parties choose to endorse one over its own insiders unresolved. Even if one ventures to guess it is in the spirit of winning, a natural question arises: knowing that her participation is indispensable to this end, what conditions does the outsider impose to associate herself with a party? With this work we aim to elucidate this matter and to explore the consequences of having an outsider as a political contender on the welfare of voters and the party.

We develop a model of nomination and election where politicians, both insiders and outsiders, propose a platform to a party. A platform is a binding pair of a policy and the rent the politician will seek in office if she is elected. After evaluating the offers they receive, the parties decide which platform to accept, and consequently what match to make for the election. Note that the nomination process is not strictly a primary election, although it can be interpreted as such. A key point is that politicians make a platform offer only once and at the beginning, meaning that they think about appealing to the party elite and voters simultaneously. Our intuition is that a party endorses an outsider when it is in need of a `fresh start', that is, when for a given reason voters associate it and its members with undesirable traits or deeds. The sentiment, however is not transferred to outsiders for a number of reasons. One is that if an association between a party and the outsider is made, it is too recent to imprint (dis)affect, another is that outsiders seem to be distinctively good at setting themselves apart from the power elite and parties, often embodying anti-establishment or anti-incumbent rhetoric (Bar, 2009). 

We incorporate the attitudes of voters towards parties through a concept called partisan affect. It is a generalization of Fowler and Hall's notion of partisan incumbency advantage,  defined as `the electoral benefit a candidate receives purely because her party is the incumbent party' (2014: 502). They also allow for negative incumbency advantage, which works in detriment of the candidate, and may be attributed to a preference for partisan balance over time, or an overestimation of outside options, namely opposition candidates. Our definition is too a transfer of benefits or liabilities from a party to its members, although we let this be true for both parties, not just the incumbent. Thus, partisan affect has the ability to enhance (or diminish) the voters' regard of insiders. This is similar to Serra's (2018) model, where popular discontent impacts candidate charisma, although in his model, the outsider is effectively the only politician affected by it.\footnote{This is because in his model, charisma stands for \textit{populist charisma}, an anti-establishment quality that is positively valued by voters, and possessed credibly only by the outsider.} By allowing the electorate to like or dislike one party or both, we are able to model a range of scenarios where a match with an outsider is undesirable, an alternative to victory, or a party's last resource option.

Our work is similar to van Weelden's (2013) re-election incentives model, which explains why voters (re)elect non-median candidates. He follows a citizen-candidate tradition, in the sense that voters as well as candidates care about the implemented policy at all times, that is, during and after the term in office. This is crucial for disciplining incumbents through the threat of electing politicians on the other side of the political spectrum if they engage in undesirable behavior. In our model, the assumption of policy awareness means that outsiders have incentives for seeking nomination to avoid the election of candidates that will implement policies far from their ideal point; whether that is insiders from the same ideological side or from the opposition. Van Weelden also deals with rent-seeking in office, which takes the form of corruption or lack of effort, and is consequently beneficial to the elected candidate and detrimental to the electorate. We let only the insider seek rent, as a way of making up for her outside options, although alternatively it can be interpreted as an ego rent for succeeding in a brand new enterprise. Departing from van Weelden, we make the nominating party accountable for any rent-seeking, thus the transfer comes from it and not voters. The rationale is that when a party nominates an outsider, it sends a message to voters, namely that its insiders are not good enough, so the match carries a credibility cost. The worst the outsider performs, the worst the party comes off, as it appears it is the best it had to offer.

We find that compared to a model where there are only insider politicians, the victory of an outsider generally leads to a more extreme winning policy, although there are a few conditions that lead to a different outcome. The victory of the party's insider has mixed effects on the the polarization of the policy. The key aspect of our results is that a party that receives two platform offers -one from its insider and one from the outsider- has always the option to accept either, and this potentially limits policy swaying or excessive rent-seeking on the advantaged candidate's behalf. Of course, the less advantaged politician may be in such a precarious condition that her offer has virtually no effect on her opponent's successful platform range. We also find that the introduction of the outsider may improve or diminish voter welfare, while in most cases it preserves or improves that of the nominating party. This is true even if the outsider engages in full rent, so at best, a victory with an outsider is better than losing, even if there is a credibility cost to pay. All our results are independent of the outsider's ideology, which suggests that candidate valence in conjunction with partisan affect (relative to that of the contenders) and not extremism, is the most important driver of welfare.

The paper proceeds as follows. In Section \ref{model} the model is presented, and its assumptions briefly discussed. Section \ref{solution} gives the solution to two main election scenarios: one in which insiders are the only contenders, and another in which an outsider enters the game. The presentation of the solution is in the order dictated by backward induction, and intuition is provided when pertinent. In Section \ref{discussion} we discuss the findings of the model, namely the effects of outsiders on the winning policies and on the welfare of voters and parties. We also compare our results with those of akin work. Concluding remarks and possible extensions of the model are in Section \ref{conclusion}. 
\section{Model}\label{model}

\subsection{Basic setting}

Ideologies are represented by a point in the real line. The bliss point of the median voter, who is the decisive voter in the election, is located at 0. Leftist standpoints lie in $(-\infty,0)$, and rightist ones in $(0,\infty)$. There are two parties $L$ and $R$, both of which are characterized by an ideological bliss point, denoted respectively $b_L$ and $b_R$, with $b_L<0<b_R$. A party is generically denoted by $P$, and its bliss point by $b$.

Both parties elicit in voters some degree of affection or disaffection, and this is comprised in the variables $\alpha_L,\alpha_R\geq0$, which we refer to as partisan affect factors. Affection may arise from regular attendance of party representatives to House votes, the implementation of transparency policies, or the effectiveness of disciplinary committees to impose sanctions on party members who behave unethically. Disaffection may be due to scandals related to corruption, the preference of voters for rotation of parties in power, or even random shocks that have a negative effect on the economic conditions in a constituency.\footnote{Wolfers (2002) finds that voters in oil producing states tend to re-elect incumbent governors during price rises and vote them out in price drops.} It is worth pointing out that in our model, the affect factors are uncorrelated with the performance of current or past politicians in office. This is meant to reflect that the affect factors are in fact strictly partisan, meaning there is no personal (dis)advantage transference from politicians to the parties. The assumption is also made for convenience, as it allows us to treat the past independently from the present, so we need not keep track of the history of previous winners. This is why $\alpha_L$ and $\alpha_R$ are taken to be exogenous. 

Each party has an affiliated politician, called insider, denoted by $l$ and $r$ respectively. There is also an outsider politician $o$, who is a political neophyte with no party affiliation that is looking to be nominated by a party to run for office, say $R$. We consider lack of experience to be the main characteristic of an outsider, not her momentary lack of affiliation. This is to say, independent politicians are typically not outsiders, since in general, they have experience either as former party members, or in elections. All politicians are characterized by a bliss point along the ideology dimension, with $k_l<0<k_r$, and $k_o\geq0$,\footnote{Note that we allow for a centrist outsider.} and a set of non-ideological attributes valued positively by all voters, comprised in a variable called valence, with $\nu_l,\nu_r,\nu_o\geq0$. We regard valence as a feature of politicians that is useful in elections, it can be interpreted henceforth as charm or charisma.\footnote{Other scholars also comprise in valence some features that are useful for governing (see Adams \textit{et al}, 2011). We will not account for this dimension, but refer the reader to Serra (2018) for a discussion of the effect of popular disaffection on the decision of a charismatic politician of how much political experience to gain.}

Politician $p$ offers her party a platform $(x_p,m_p)$, which consists of a policy $x_p$, and the rent $m_p\geq0$ she will seek in the event she is elected for office. It is up to the party to decide whether to take the offer, and thus endorse the candidate, or not. Each party must endorse strictly one candidate for the election. Policies are binding, meaning that they are straightforwardly implemented by the winner of the election. Policy offers are such that $x_l\leq0$, and $x_r,x_o\geq0$.

We follow Van Weelden in assuming that rent is lack of effort on the politician's behalf ``either taking time off or not addressing difficult but important non-ideological issues'' (Van Weelden, 2013).\footnote{An example of rent-seeking is the frequent golfing Donald Trump has engaged in during his presidency. Solender (2020, July 12) estimates that Trump has gone golfing more than 1 in every 5 days in the course of his term. His arrival at the Oval Office too has been reported to be around noon, and it has been attributed to watching television (Rogers, K. and Annie Karni, 2020, April 23).} We make the assumption that the outsider is more prone to seeking a positive rent than insiders, either because she is more likely to ignore political norms and processes -she is less aware or concerned by peer pressure-, or has less vocation. Indeed, a higher sense of calling would translate (for the most part) into an earlier incursion into politics. To highlight this, we set $m_l=m_r=0$, and let $m_o\geq0$. Lack of effort is costly to the party because the slacking of a politician is picked up by the administration or her political staff, and this carries a cost in terms of time and resources.\footnote{Rent-seeking could also pose a reputation or credibility-cost to the incumbent party. Since we have assumed that the performance of politicians does not affect the appraisal of a party, we do not develop this possibility further.} We set the cost of having a rent-seeking politician in office to be $c=m_p$, so the party fully restores effort to constituents.

Since $L$ receives only one platform offer, namely $(x_l,m_l)$, it accepts it, making politician $l$ its candidate $c_L$, and her proposed policy $x_l$, the campaign policy $x_L$. When deciding which platform to accept, $R$ takes into account the preference of voters, which may be influenced by the partisan affect factors $\alpha_L$ and $\alpha_R$. To be precise, we assume that the affect  factors influence the voters' appraisal of party-insider matches, that is, those between $L$ and $l$ and $R$ and $r$, but not that of a party with the outsider, $R$ and $o$. The reason being that insiders are more associated with a party than are outsiders, and this leads to a higher chance of affect transfer.\footnote{It is possible that insiders have previously run as candidates, either for the same or a lower office, and ``since candidates tend to campaign on and talk frequently about the issues their party owns, the public comes to associate certain traits with that party’s candidates'' (Hayes, 2005: 908).} Upon accepting the offer $(x_p,m_p)$ with $p$ in $\{r,o\}$, politician $p$ becomes $R$'s candidate $c_R$, and her policy $x_p$, the campaign policy $x_R$. We write $\left(P,c_P,(x_P,m_P)\right)$ for a party-candidate-platform match. 

One of our interests is to determine the conditions under which a party opts to run with an outsider as opposed to its insider politician, even when this is potentially costly to the party. We also investigate the effect of having an outsider contender on the welfare of voters and the nominating party. In our model, the utility functions, payoffs, and strategies of all agents are common knowledge. Hence, both the insider and the outsider know they compete for nomination through their platforms $(x_r,m_r)$ and $(x_o,m_o)$. Furthermore, politicians and parties know the preference of voters, hence they are able to determine which platforms appeal to voters best.

\subsection{Preferences}

\subsubsection{Voter preferences}
On election day, the electorate sincerely votes for the party-candidate-platform match they prefer in the ballot $\{\left(L,c_L, (x_L,m_L)\right), \left(R,c_R,(x_R,m_R)\right)\}$, with $c_L=l$, $x_L=x_l$, $m_L=0$, and $c_R$ in $\{r,o\}$, $x_R$ in $\{x_r,x_o\}$ and $m_R$ in $\{0,m_o\}$. A party-candidate-platform match $\left(P,c_P,(x_P,m_P)\right)$ gives utility to voters through two dimensions: the campaign policy $x_P$, and the valence associated to candidate $c_P$, which is $\nu_{c_P}$. 

We deem straightforward to assume that the electorate values a given policy equally, regardless of whether it is proposed by an insider politician or an outsider.\footnote{It is possible that the voters' trust on a candidate's ability to implement a policy depends on her political experience. Recall that we have not accounted for the governing aspect of valence in this model, so this case is beyond our current analysis. However, a reassuring observation of Kitschelt (2000) is that in presidential systems, candidate competition is through personal charisma and not policy programs.} Similar to Serra (2018), we assume that the effect of the valence of candidates is mediated by the partisan affect factor, so voters perceive the valence of $P$'s candidate $c_P$, namely $\nu_{c_P}$, to be $\alpha_P\nu_{c_P}$.\footnote{Serra works with charisma $c$ instead of valence $\nu$, and with popular discontent $\delta$ instead of partisan affect $\alpha$. The perception of the valence of a candidate in Serra's model is $\delta c$.} 

Generally speaking, the utility to the median voter $M$, over the match $\left(P,c_P,(x_P,m_P)\right)$, is defined to be 
\begin{equation}\label{utM}
U_M\left(P,c_P,(x_P,m_P)\right)=-\abs{x_P}+\alpha_P\nu_{c_P},\quad\text{with $\alpha_P\equiv1$ if $c_P$ is the outsider.}
\end{equation}

We thus have,
\begin{eqnarray*}
U_M\left(L,l,(x_l,0)\right)&=&-\abs{x_l}+\alpha_L\nu_l,\\
U_M\left(R,r,(x_r,0)\right)&=&-\abs{x_r}+\alpha_R\nu_r,\\
U_M\left(R,o,(x_o,m_o)\right)&=&-\abs{x_o}+\nu_o.
\end{eqnarray*}

\noindent\textit{Indifference assumption}: If the matches $\left(L,l,(x_l,0)\right)$ and $\left(R,c_R,(x_R,m_R)\right)$ with $c_R$ in $\{r,o\}$ yield the same utility to $M$, then she randomizes her vote in such a way that either match wins the election with equal probability.\\

\noindent\textbf{Political agent preferences}
\vspace{0.25cm}

\noindent We follow Osborne and Slivinski (1996) in assuming that just like ordinary citizens, all the politicians care about the winning policy. Since party elites are politicians themselves, we expect parties $L$ and $R$ to be policy-aware as well.

\subsubsection{Politician preferences}

The utility that a politician derives from proposing a given platform is contingent on whether it is accepted by the party or not, and on the subsequent result of the election. The election determines if she will get the rent she seeks, and the winning policy upon which she draws utility.

Assume that politician $p$ with bliss point $k_p$, proposes the platform $(x_p,m_p)$ to $P$. There are two main cases to consider.

\begin{enumerate}
	\item The offer of $p$ is accepted by $P$.
	
	\begin{enumerate}[label=(\alph*)]
		\item If $p$ ends up winning the election,
		\begin{equation*}
		U_p(x_p,m_p)=-\abs{x_p-k_p}+m_p.
		\end{equation*}
		\item If $p$ loses the election,
		\begin{equation*}
		U_p(x_p,m_p)=-\abs{y-k_p},\hspace{0.5em}\text{where $y$ is the winning policy.}
		\end{equation*}
	\end{enumerate}

	\item The offer of $p$ is rejected by $P$.
	
	Note that $R$ is the only party that can reject offers, thus $p$ in $\{r,o\}$. We have:
	\begin{equation*}
		U_p(x_p,m_p)=-\abs{y-k_p},\hspace{0.5em}\text{where $y$ is the winning policy.}
	\end{equation*}
It is clear that having an offer rejected yields the same utility as being endorsed by a party and losing the election; this is because no monetary costs to running were included in the model.\footnote{Dal Bo \textit{et al.} 2017 point out that the costs of running are usually covered by parties or outside donors rather than politicians themselves. So, at least on this end, not including them does not alter the preferences of politicians.}
\end{enumerate}

\noindent\textit{Indifference assumption}: If two platforms $(x,m)$ and $(x',m')$ yield the same utility to a politician, she randomizes her offer in such a way that either platform is proposed with equal probability.

\subsubsection{Party preferences}

We define the utility that party $P$, with bliss point $b$, derives from the offer $(x_p,m_p)$ of politician $p$ to be:

\begin{align}\label{partypref}
U_P(x_p,m_p)=&
\begin{cases}
-\abs{x_p-b}-m_p& \text{if $p$ is endorsed and wins the election,}\\
-\abs{y-b}&\text{if $p$ is endorsed and loses the election, and $y$ is the winning policy.}
\end{cases}
\end{align}
It should be stressed that the winning policy $y$ is potentially affected by the politician that $P$ ends up running with, we will elaborate this point further in the next section.\footnote{It suffices to say that if party $R$ anticipates losing the election, then it strategically determines which candidate to endorse, as to minimize the distance between $y$ and its own bliss point $b_R$.}

\noindent\textit{Indifference assumption}: If two offers $(x,m)$ and $(x',m')$ yield the same utility to a party, then it accepts that of the insider. 

Apart from being convenient, this assumption is meant to reflect that parties know insiders better, and for this reason, trust them more. In any case, the assumption is not restricting and can be easily relaxed without significantly altering the model.

\subsection{Timing}

The utilities, valences, and the affect factors are all common knowledge.

\begin{enumerate}
	\item Politicians make platform offers $(x,m)$ to their parties. That is, $l$ offers $(x_l,0)$ to $L$, and $r$ and $o$ offer $(x_r,0)$ and $(x_o,m_o)$ to $R$ respectively. All of the offers are made simultaneously.
	
	\item Parties simultaneously select their preferred platform offer, and endorse the candidate associated to it. At this point, the  party-candidate-platform matches $\left(L,l,(x_l,0)\right)$ and $\left(R,c_R,(x_R,m_R)\right)$ with $c_R$ in $\{r,o\}$, $x_R$ in $\{x_r,x_o\}$ and $m_R$ in $(0,m_o)$ are made.
	
	\item The electorate sincerely votes for their preferred match.
\end{enumerate}

It is worth noting that only $l,r,o$, and $R$ make strategic choices. Indeed, the electorate votes sincerely, so it just picks the match that maximizes its utility. Party $L$ receives only one platform offer, and by convention, accepts it. In what follows, we derive the solution of the game, which is subgame-perfect equilibrium, by backward induction. The main results are in Subsection \ref{eqplat}.

\section{Solution}\label{solution}

We first find from the voters' decision making, the range of policies that parties deem acceptable. We then find the optimal strategies of the politicians $l$, $r$ and $o$ within the specified range. That is, we determine the equilibrium offers $(x^*_l,0)$, $(x^*_r,0)$, and $(x^*_o,m^*_o)$. With this, we work out which offer the parties end up accepting, and the match that ultimately wins the election.

Let us denote by $X=\abs{x_R}-\abs{x_L}$,\footnote{Or equivalently, $X=x_R+x_L$, since $x_L\leq0\leq x_R$.} the relative policy divergence from the origin, and by $V_{c_R}=\alpha_R\nu_{c_R}-\alpha_L\nu_L$ with $\alpha_L,\alpha_R\geq0$, the relative valence advantage of $R$'s candidate $c_R$, with respect to that of $L$, $c_L$.
With this notation, $V_r=\alpha_R\nu_r-\alpha_L\nu_l$ stands for $R$'s relative valence advantage of endorsing $r$ (with respect to $l$), and $V_o=\nu_o-\alpha_L\nu_l$ is that of endorsing $o$. We keep our analysis general, and deal with $V_{c_R}$ for any candidate $c_R$, unless otherwise stated.

We emphasize that by definition, and despite constraints on $x_L$ and $x_R$, $X$ and $V_{c_R}$ can take value in all the real line. Therefore, we study the entire coordinate plane $V_{c_R} X$.

\subsubsection{Outcome of the election}

\begin{observation}\ \\
If $X=V_{c_R}$ with $c_R$ in $\{r,o\}$, then the median voter is indifferent between the match $\left(L,c_L,(x_L,m_L)\right)$, and $\left(R,c_R,(x_R,m_R)\right)$. If $X>V_{c_R}$, the median voter prefers $\left(L,c_L,(x_L,m_L)\right)$ over $\left(R,c_R,(x_R,m_R)\right)$, and if $X<V_{c_R}$, she prefers $\left(R,c_R,(x_R,m_R)\right)$ over $\left(L,c_L,(x_L,m_L)\right)$.
\end{observation}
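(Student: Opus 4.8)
The plan is to reduce the three-way claim to a single sign comparison by subtracting the two relevant utilities from \eqref{utM}. First I would record the utility of the left match, $U_M\left(L,c_L,(x_L,m_L)\right)=-\abs{x_L}+\alpha_L\nu_l$ (recall $c_L=l$, so $\nu_{c_L}=\nu_l$), and that of the right match, $U_M\left(R,c_R,(x_R,m_R)\right)=-\abs{x_R}+\alpha_R\nu_{c_R}$. The key preliminary remark is that the rents $m_L$ and $m_R$ never enter $U_M$, so they play no role and can be discarded throughout; the median voter's preference is decided purely by the policy term and the affect-mediated valence term.

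Next I would form the difference of the two utilities and regroup it into exactly the two quantities that were defined just before the statement, namely
\[
U_M\left(L,c_L,(x_L,m_L)\right)-U_M\left(R,c_R,(x_R,m_R)\right)=\bigl(\abs{x_R}-\abs{x_L}\bigr)-\bigl(\alpha_R\nu_{c_R}-\alpha_L\nu_l\bigr)=X-V_{c_R},
\]
using $X=\abs{x_R}-\abs{x_L}$ and $V_{c_R}=\alpha_R\nu_{c_R}-\alpha_L\nu_l$. The three cases of the observation then read off immediately from the sign of this single expression: $M$ is indifferent exactly when the difference vanishes, i.e. $X=V_{c_R}$; she strictly prefers the $L$-match exactly when the difference is positive, i.e. $X>V_{c_R}$; and she strictly prefers the $R$-match exactly when it is negative, i.e. $X<V_{c_R}$. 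Each equivalence is just the definition of $U_M(L,\cdot)\gtrless U_M(R,\cdot)$.

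The only point meriting a word of care is that the statement is asserted uniformly for $c_R\in\{r,o\}$. For $c_R=r$ the computation above is literal. For $c_R=o$ one must invoke the convention that $\alpha_P\equiv1$ when the candidate is the outsider: under it, $\alpha_R\nu_{c_R}$ becomes $\nu_o$ and $V_o=\nu_o-\alpha_L\nu_l$, so the identical regrouping yields $X-V_o$ and the same three cases hold verbatim with $V_o$ in place of $V_r$. Since the argument is genuinely a one-line substitution, there is no real obstacle; the one thing to verify is precisely that both candidate cases collapse into the single quantity $V_{c_R}$, which they do by exactly that convention.
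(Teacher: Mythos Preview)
Your proposal is correct and follows essentially the same approach as the paper: both arguments amount to writing the difference of the two utilities from \eqref{utM} as $X-V_{c_R}$ and reading off the three cases from its sign. Your version is slightly more explicit (noting that rents drop out and that the outsider case is handled by the $\alpha_P\equiv 1$ convention), but the underlying idea is identical.
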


\begin{proof}\ \\
Note that $X=V_{c_R}$ implies $\abs{x_R}-\abs{x_L}=\alpha_R\nu_R-\alpha_L\nu_L$. Rearranging terms gives $$-\abs{x_L}+\alpha_L\nu_L=U_M\left(L,c_L,(x_L,m_L)\right)=-\abs{x_R}+\alpha_R\nu_R=U_M\left(R,c_R,(x_R,m_R)\right),$$ which is the conclusion. Similar arguments apply to the other two cases. This is shown in Figure \ref{fig:area}.
\end{proof}

\begin{figure}[h!]
	\centering
	\caption{Partisan appeal-induced preference regions}
	\includegraphics[width=7.5cm]{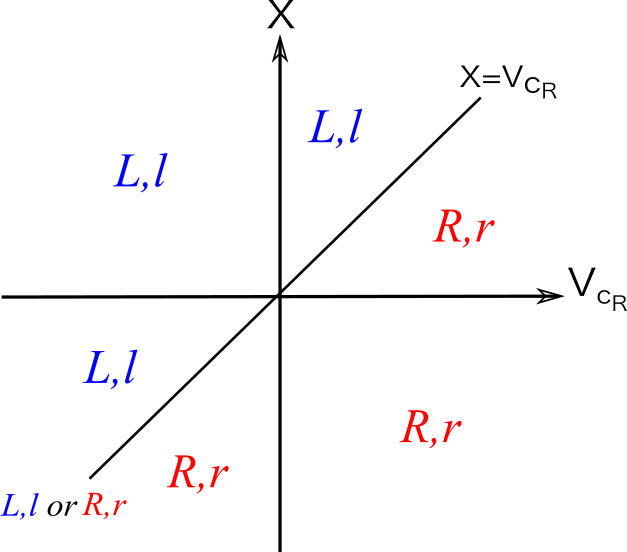}
	\label{fig:area}
\end{figure}
 
\subsubsection{Winning policy ranges}

\begin{lemma}\label{Lemma1}\ \\
	In the presence of relative valence advantage $V_{c_R}\geq0$, party $R$ accepts policies $x_R$ in $[0,V_{c_R}]$, and $L$ accepts $x_L\leq0$. The match $\left(L,l,(x_L,0)\right)$ loses the election to $\left(R,c_R,(x_R,m_R)\right)$, with $m_R\geq0$. In the presence of relative valence disadvantage $V_{c_R}<0$, $L$ accepts policies $x_L$ in $[V_{c_R},0]$, and $R$ accepts any $x_R\geq0$. In this case, the match $(R,c_R,(x_R,m_R))$ with $m_R\geq0$, loses the election to $(L,l,(x_L,0))$.
\end{lemma}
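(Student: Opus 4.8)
The plan is to reduce everything to the Observation, which already states that the median voter weakly favours $R$'s match exactly when $X \le V_{c_R}$ (strictly when $X < V_{c_R}$, with indifference at equality). The first move is to rewrite the divergence using the sign restrictions $x_L \le 0 \le x_R$: then $X = |x_R| - |x_L| = x_R + x_L$, the identity recorded in the footnote. I would also note at the outset that the rent $m_R$ never appears in $U_M$, so it is irrelevant to the comparison of the two matches; this is precisely what justifies the clause ``with $m_R \ge 0$'' and lets me disregard rent entirely for the purpose of deciding the winner.

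For the advantage case $V_{c_R} \ge 0$, I would prove both directions of the range claim. For membership, if $x_R \in [0, V_{c_R}]$ then every admissible $x_L \le 0$ gives $X = x_R + x_L \le V_{c_R} + 0 = V_{c_R}$, so by the Observation the median voter favours $R$; thus any such $x_R$ beats whatever policy $L$ fields, which is exactly the sense in which $R$ ``accepts'' $[0, V_{c_R}]$ and the match $(L,l,(x_L,0))$ loses. For tightness, I would check that no $x_R > V_{c_R}$ enjoys this robustness: the choice $x_L = 0$ then gives $X = x_R > V_{c_R}$, handing the election to $L$. Since $L$ receives a single offer and loses throughout, it accepts the entire half-line $x_L \le 0$. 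The disadvantage case $V_{c_R} < 0$ is the mirror image and I would dispatch it by symmetry: for $x_L \in [V_{c_R}, 0]$ and any $x_R \ge 0$ one has $X = x_R + x_L \ge 0 + V_{c_R} = V_{c_R}$, so the Observation now favours $L$; and any $x_L < V_{c_R}$ fails robustly because $x_R = 0$ yields $X = x_L < V_{c_R}$. This pins down $L$'s winning range as $[V_{c_R}, 0]$ and lets $R$ accept any $x_R \ge 0$, all of which lose.

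The one genuinely delicate point, and the place I expect to spend the most care, is the closed endpoints. The stated intervals are closed, but the bound $X \le V_{c_R}$ is met with equality only at the single corner ($x_R = V_{c_R}$, $x_L = 0$ in the advantage case), where the Observation returns indifference rather than a strict win, so the outcome there is governed by the coin-flip of the median voter's indifference assumption. I would therefore emphasise that on the interior of each range the inequality is strict and the victory unambiguous, and that the lone boundary point is exactly what the indifference assumption is meant to absorb, so reporting the ranges as closed intervals is harmless. Beyond this bookkeeping of strict versus weak inequalities, nothing in the argument goes past the sign constraints and the Observation; the real work is just spelling out the tightness direction that upgrades the one-sided bounds into the exact intervals asserted.
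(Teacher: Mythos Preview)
Your proposal is correct and follows essentially the same route as the paper: both reduce the comparison to the Observation, identify the opponent's best reply as the centrist policy (you do this implicitly via the bound $X=x_R+x_L\le V_{c_R}+0$, the paper via ``$x_L=0$ maximizes $X$''), read off the winning interval, handle the closed endpoint by a limiting/indifference argument, and dispatch the $V_{c_R}<0$ case by symmetry. The only cosmetic difference is that the paper opens with a symmetric-policy thought experiment and a figure before arriving at the same inequality you write down directly.
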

\begin{proof}\ \\
We refer the reader to Appendix \ref{AppA}.
\end{proof}

\begin{corollary}\label{cor1}\ \\
	If the valence advantage is null $V_{c_R}=0$, then the range of winning policies of $R$ converges to a single point, which is the bliss point of the median voter.
\end{corollary}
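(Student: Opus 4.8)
The plan is to read off the corollary as the boundary case $V_{c_R}=0$ of Lemma \ref{Lemma1}. Since $0\geq 0$, the relevant regime is the relative valence advantage case, in which Lemma \ref{Lemma1} asserts that party $R$ accepts precisely the policies $x_R$ in the interval $[0,V_{c_R}]$ and that the corresponding match $\left(R,c_R,(x_R,m_R)\right)$ wins the election. Thus the range of winning policies of $R$ is exactly this interval. Setting $V_{c_R}=0$ degenerates $[0,V_{c_R}]$ to $[0,0]=\{0\}$, so the range of winning policies collapses to the single value $x_R=0$.

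It remains only to identify this surviving point. By the basic setting, the bliss point of the median voter is located at the origin, so $x_R=0$ is precisely the median voter's ideal policy. This yields the stated conclusion that, when the valence advantage vanishes, $R$'s range of winning policies converges to the median voter's bliss point.

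I do not expect a substantive obstacle, since the argument is a direct specialization of Lemma \ref{Lemma1}. The one point I would verify is internal consistency at the boundary $V_{c_R}=0$: the disadvantage regime $V_{c_R}<0$ does not apply, so there is no conflict between the two cases of the lemma, and one can check via the Observation that for any admissible $x_L\leq 0$ the relative policy divergence at $x_R=0$ satisfies $X=\abs{x_R}-\abs{x_L}=-\abs{x_L}\leq 0=V_{c_R}$, so the median voter indeed (weakly) prefers the match with $R$. This confirms that $x_R=0$ is genuinely a winning policy and that the degenerate interval has been interpreted correctly.
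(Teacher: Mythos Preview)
Your proof is correct and follows essentially the same approach as the paper: both obtain the result by directly substituting $V_{c_R}=0$ into Lemma~\ref{Lemma1}, collapsing the winning-policy interval $[0,V_{c_R}]$ to the single point $0$, the median voter's bliss point. Your additional consistency check at the boundary is sound but not required.
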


\begin{proof}
	The proof is obtained immediately substituting $V_{c_R}=0$ in Lemma \ref{cor1}.
\end{proof}
The interest of the corollary is in its interpretation: if neither $c_L$ nor $c_R$ enjoys a valence advantage (note that $V_{c_R}=0$ implies $\alpha_R\nu_R=\alpha_L\nu_L$), then the matches compete for votes purely through their policies $x_L$ and $x_R$. This is nothing but Hotelling-Downs (1929, 1957) model of two party competition, where both policies converge to 0.

\subsubsection{Equilibrium platforms}\label{eqplat}

We have established that the range of acceptable policies depends on the relative valence advantage factor $V_{c_R}$ with $c_R$ in $\{r,o\}$. It is turn to determine the optimal policies for politicians within each range, and their respective rent offers. Before we start comparing all the possible combinations of values of $V_r$ and $V_o$, we turn our attention to a more specific, yet equally important case: that which consists of only insiders $l$ and $r$. As it may be inferred, this simplified case involves only one range-conditioning variable, namely $V_r=\alpha_R\nu_r-\alpha_L\nu_l$. As we have mentioned before, one of our interests is to investigate how the entrance of an outsider affects the welfare of all the agents involved, so this is the base of our analysis. Equally important is the fact that the insider-only case lays the basic intuition for the solution of the more involved model with an outsider.

\begin{theorem}\label{Thm1}\ \\
	A game consisting of only insiders $l$ and $r$, has the equilibrium offers and winning match stated in Table \ref{tab:eq1}.
\end{theorem}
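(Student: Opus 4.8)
The plan is to solve the game by backward induction over its three stages, using the election outcome (the Observation) and the winning‑policy ranges (Lemma \ref{Lemma1}) as the backbone. In the insider‑only game each party receives exactly one offer, so the acceptance stage is trivial, both matches $\left(L,l,(x_l,0)\right)$ and $\left(R,r,(x_r,0)\right)$ are formed, and the party bliss points $b_L,b_R$ never enter; the only genuine strategic choices are the simultaneous policy offers $x_l\le0$ and $x_r\ge0$, so I would characterize the Nash equilibria of that reduced simultaneous game. I would split on the sign of $V_r=\alpha_R\nu_r-\alpha_L\nu_l$ and treat $V_r\ge0$ in detail, the case $V_r<0$ following by the mirror‑image argument (swap $l\leftrightarrow r$ and left $\leftrightarrow$ right, replacing $\min\{k_r,V_r\}$ by $\max\{k_l,V_r\}$ and $x_l^*=0$ by $x_r^*=0$).

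Fix $V_r\ge0$. By Lemma \ref{Lemma1}, whenever a match is formed the median elects $R$, so $r$ is the winner and $l$ the loser, and $R$'s winning policies are exactly those in $[0,V_r]$. First I would pin down the loser's offer. Since $l$ loses, her payoff is $-\abs{x_r-k_l}$, which depends only on the winning policy $x_r\ge0$ and is decreasing in $x_r$; moreover raising $x_l$ toward $0$ tightens the indifference threshold $X=x_r+x_l\le V_r$ and shrinks the rightward room available to $r$. I would argue $x_l^*=0$ as follows: if $l$ offered some $x_l<0$, then $r$ could set $x_r>V_r$ and still satisfy $x_r+x_l\le V_r$, i.e. still win; but against such an $x_r>V_r$ the profile is not an equilibrium, because $l$ could deviate to $x_l=0$, obtain $X=x_r>V_r$, and seize the election (winning at policy $0$ yields $k_l$, which beats losing at $x_r$, worth $k_l-x_r$). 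Ruling out these profiles forces $x_l^*=0$, whence the binding constraint is simply $x_r\le V_r$.

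Next I would optimize the winner's offer. With $m_r=0$, candidate $r$ has payoff $-\abs{x_r-k_r}$ and wants $x_r$ as close to $k_r>0$ as the constraint $x_r\le V_r$ allows, giving $x_r^*=\min\{k_r,V_r\}$ and a winning policy of the same value. When $k_r\le V_r$ this is immediate: $r$ offers her bliss point, wins outright at maximal payoff, and $l$ cannot win since $X=x_r^*\le V_r$ forces $x_l>V_r-x_r^*\ge0$, impossible. I expect the main obstacle to be the subcase $k_r>V_r$: there $r$ wants $x_r$ as large as possible subject to $x_r\le V_r$, yet at $x_r=V_r$ the median is exactly indifferent and splits her vote $50/50$, whereas every $x_r<V_r$ wins for sure, so—preferring a sure win just below $V_r$ to the knife‑edge lottery—$r$ has no exact maximizer. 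I would resolve this open‑boundary problem by the standard device of closing the winning region, selecting $x_r^*=V_r$ with $R$ the winner (equivalently, awarding $R$ the boundary tie), and then verify the profile $(x_l^*,x_r^*)=(0,\min\{k_r,V_r\})$ is deviation‑proof for both players: $l$ cannot win because $X=x_r^*\le V_r$, and $r$ cannot improve without either overshooting $k_r$ or leaving $[0,V_r]$ and losing. Recording these offers together with the winner $R$, and appending the symmetric statement for $V_r<0$, reproduces the entries of Table \ref{tab:eq1}.
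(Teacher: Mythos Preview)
Your approach is essentially the paper's: split on the sign of $V_r$, use Lemma~\ref{Lemma1} to fix the winning side and its policy range, let the winner pick the point in that range closest to her bliss point, and close the open boundary by the $\varepsilon\to0$ device (which is exactly what the paper does in the footnote to Lemma~\ref{Lemma1}).

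There is one gap worth flagging. Your argument that $x_l^*=0$ is \emph{forced} does not go through in the subcase $k_r\le V_r$. You correctly rule out equilibrium profiles with $x_r>V_r$ (against such an $x_r$, $l$ can deviate to $0$ and win), but when $k_r\le V_r$ the winner $r$'s best response to \emph{any} $x_l\le0$ is simply $x_r=k_r\le V_r$, so your deviation chain never fires: every profile $(x_l,k_r)$ with $x_l\le0$ is then a Nash equilibrium of the simultaneous stage, and $l$ is genuinely indifferent among them. The paper does not attempt to derive the loser's offer at all; it adopts ``the losing candidate proposes a centrist policy'' as an explicit convention and records $x_l^*=0$ (respectively $x_r^*=0$) on that basis. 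Since the winning match is identical across all these equilibria, this is an equilibrium-selection issue rather than a substantive error, but your claim to have ``ruled out'' $x_l<0$ overstates what the deviation argument actually delivers; you should either restrict that argument to the subcase $k_r>V_r$ where it works, or follow the paper and treat the loser's centrist offer as a selection convention.
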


	\begin{table}[ht]
	\caption{Equilibrium platforms and winning match, insider game}
	\begin{center}
		\begin{tabular}{ | l| l | l| l |}
			\hline
			\textbf{\textit{Case}}& \textbf{\textit{Valence condition}} &\textbf{\textit{Platform offers}} & \textbf{\textit{Winning match}}\\ \hline
			
			\multirow{4}{*}{\quad\textit{1}} &\multirow{4}{*}{\qquad\quad$\boldsymbol{V_r< 0}$} & $x^*_l=\max\{k_l,V_r\}$& \textit{Party}: $P=L$\\
			& & $m^*_l=0$& \textit{Politician}: $c_L=l$\\
			& &$x^*_r=0$ & \textit{Policy}: $x_L=\max\{k_l,V_r\}$\\
			& & $m^*_r=0$& \textit{Rent}: $m_L=0$\\
			\hline 
			
			\multirow{4}{*}{\quad\textit{2}}& \multirow{4}{*}{$\qquad \quad\boldsymbol{V_r= 0}$}& $x^*_l=0$& \textit{Party}: $P=L/R$\\
			& & $m^*_l=0$& \textit{Politician}: $c_P=l/r$\\
			& & $x^*_r=0$& \textit{Policy}: $x_P=0$\\ 
			& & $m^*_r=0$ & \textit{Rent}: $m_P=0$\\
			\hline
			
			\multirow{4}{*}{\quad\textit{3}}& \multirow{4}{*}{$\qquad \quad\boldsymbol{0< V_r}$}& $x^*_l=0$& \textit{Party}: $P=R$\\
			& & $m^*_l=0$& \textit{Politician}: $c_R=r$\\
			& & $x^*_r=\min\{k_r,V_r\}$& \textit{Policy}: $x_R=\min\{k_r,V_r\}$\\
			& & $m^*_r=0$ & \textit{Rent}: $m_R=0$\\
			\hline			
		\end{tabular}
		\end{center}
		\label{tab:eq1}
	\end{table}

\begin{proof}\ \\
	We give the main idea for Case 1, $V_r<0$, similar arguments apply to the other two cases. By Lemma \ref{Lemma1}, if $V_r<0$, then the match $(R,r,(x_r,0))$ with $x_r\geq0$ is bound to lose the election to $(L,l,(x_l,0))$ with $x_l$ in $[V_r,0]$. Using the notation introduced in Section \ref{partypref}, this means that for $R$, $U_R(x_r,0)=-\abs{x_l-b_R}$ for all $x_r\geq0$. We adopt the convention that in cases like this, the losing candidate proposes a centrist policy, so $x^*_r=0$. By definition of rents, we have $m^*_l=m^*_r=0$. What is left is to compute the actual value of $x^*_l$.
	
	Note that if the bliss point of $l$ is in the range of acceptable offers for $L$, namely $k_l$ in $[V_r,0]$, then she offers it as her campaign policy $x_l=k_l$, since this maximizes her utility. This is show in Figure \ref{fig:eq1}. If $k_l<V_r$, then she chooses the policy closest to her bliss point on the interval $[V_r,0]$, which is $x_l=V_r$. This is shown in \figref{fig:eq2}. The two arguments together imply that if $V_r<0$, then $x^*_l=\max\{k_l,V_r\}$. 
	
	\begin{figure}[!tbp]
		\centering
		\begin{minipage}[b]{0.4\textwidth}
			\includegraphics[width=\textwidth]{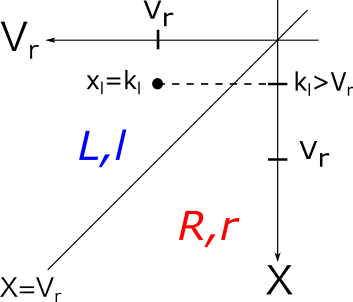}
			\caption{Equilibrium policy of $l$ when\\$V_r<k_l<0$}
			\label{fig:eq1}
		\end{minipage}
		\qquad
		\begin{minipage}[b]{0.4\textwidth}
			\includegraphics[width=\textwidth]{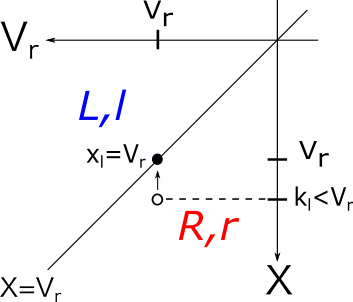}
			\caption{Equilibrium policy of $l$ when\\$k_l<V_r<0$}
			\label{fig:eq2}
		\end{minipage}
	\end{figure}
	
	Finally, we show that by accepting $x^*_l=\max\{k_l,V_r\}$, $L$ wins the election. Since $x^*_l=\max\{k_l,V_r\}$, we have $V_r\leq x^*_l$. Substituting the definition of $V_r$ into the last inequality gives $\alpha_R\nu_r-\alpha_L\nu_l<\max\{k_l,V_r\}$, rearranging terms and using the definition of absolute value we get: $U_M(R,r,(x^*_r=0,0))=\alpha_R\nu_r<-\abs{\max\{k_l,V_r\}}-\alpha_L\nu_l=U_M(L,l,(x^*_l=\max\{k_l,V_r\},0))$. This completes the proof.

 \end{proof}

Deriving the equilibrium offers and winners of a game with an outsider implies incorporating into the analysis the variables $V_o$, $k_o$, and $b_R$. As it may be inferred, the number of cases to consider grows exponentially fast. Fortunately, we are able to summarize them in a table similar to that of Theorem 1.

\begin{theorem}\label{Thm2}\ \\
A game consisting of insiders $l$ and $r$, and outsider $o$, has the equilibrium offers and winning match stated in Table \ref{tab:eq2}.

	\begin{center}
	\begin{longtable}{| l | l | l | l |}
		\caption[Equilibrium platforms and winning match]{Equilibrium platforms and winning match, game with outsider\label{tab:eq2}}\\
		\hline
		\textbf{\textit{Case}}& \textbf{\quad\textit{Valence conditions}} &\textbf{\textit{Platform offers}} & \textbf{\textit{Winning match}}\\
		\hline
		\endfirsthead
		\caption[]{Equilibrium platforms and winning match, game with outsider (continued)}\\
		\hline
		\textbf{\textit{Case}}& \textbf{\quad\textit{Valence conditions}} &\textbf{\textit{Platform offers}} & \textbf{\textit{Winning match}}\\
		\hline
		\endhead
		\hline
		\multicolumn{4}{|r|}{\small Continued on next page}\\
		\hline
		\endfoot
		\multicolumn{4}{|r|}{\small End of table}\\
		\hline
		\endlastfoot
		
		\multirow{4}{*}{\quad\textit{1}}& \multirow{4}{*}{$\quad\enspace\qquad\boldsymbol{V_r\leq V_o<0}$}& $x^*_l=\max\{k_l,V_o\}$& \textit{Party}: $P=L$\\
		& & $x^*_r=0$& \textit{Politician}: $c_L=l$\\
		& & $x^*_o=0$& \textit{Policy}: $x_L=\max\{k_l,V_o\}$\\
		& & $m^*_o=0$& \textit{Rent}: $m_L=0$\\
		
	   \pagebreak
		
		\multirow{4}{*}{\quad\textit{2} }& \multirow{4}{*}{$\quad\enspace\qquad\boldsymbol{V_o\leq V_r<0}$} &$x^*_l=\max\{k_l,V_r\}$& \textit{Party}: $P=L$\\
		& &$x^*_r=0$& \textit{Politician}: $c_L=l$\\
		& & $x^*_o=0$& \textit{Policy}: $x_L=\max\{k_l,V_r\}$\\
		& & $m^*_o=0$& \textit{Rent}: $m_L=0$\\
		\hline
		
		\multirow{4}{*}{\quad\textit{3}}& \multirow{4}{*}{$\quad\enspace\qquad\boldsymbol{V_r<0=V_o}$}& $x^*_l=0$& \textit{Party}: $P=L/R$\\
		& & $x^*_r=0$& \textit{Politician}: $c_P=l/o$\\
		& & $x^*_o=0$& \textit{Policy}: $x_P=0$\\
		& & $m^*_o=2\abs{\max\{k_l,V_r\}}$& \textit{Rent}: $m_P=0/2\abs{\max\{k_l,V_r\}}$\\
		\hline
		
		\multirow{4}{*}{\quad\textit{4}}& \multirow{4}{*}{$\quad\enspace\qquad\boldsymbol{V_o\leq0=V_r}$}& $x^*_l=0$& \textit{Party}: $P=L/R$\\
		& & $x^*_r=0$& \textit{Politician}: $c_P=l/r$\\
		& & $x^*_o=0$& \textit{Policy}: $x_P=0$\\
		& & $m^*_o=0$& \textit{Rent}: $m_P=0$\\
		\hline
		
		\multirow{5}{*}{\quad\textit{5}}& \multirow{2.5}{*}{$\quad\enspace\qquad\boldsymbol{V_r\leq0<V_o}$
		}& $x^*_l=0$& \textit{Party}: $P=R$\\
		& & $x^*_r=0$& \textit{Politician}: $c_R=o$\\
		&\qquad\qquad\enspace\quad\textit{and}& $x^*_o=\min\{k_o,V_o\}$& \textit{Policy}: $x_R=\min\{k_o,V_o\}$\\
		& \multirow{1.5}{*}{$\boldsymbol{\frac{\max\{k_l,V_r\}+\min\{k_o,V_o\}}{2}\leq b_R}$}& $m^*_o=\abs{\max\{k_l,V_r\}-b_R}$& \textit{Rent}: $m_R=\abs{\max\{k_l,V_r\}-b_R}$\\
		& & $\qquad\enspace-\abs{\min\{k_o,V_o\}-b_R}$&$\qquad\qquad\quad\enspace-\abs{\min\{k_o,V_o\}-b_R}$\\
		\hline
		
		\multirow{4}{*}{\quad\textit{6}}& \multirow{2}{*}{$\quad\enspace\qquad\boldsymbol{V_r\leq0<V_o}$}& $x^*_l=0$& \textit{Party}: $P=R$\\
		&\multirow{1.9}{*}{\qquad\qquad\enspace\quad\textit{and}}& $x^*_r=0$& \textit{Politician}: $c_R=o$\\
		&\multirow{2}{*}{$\boldsymbol{b_R<\frac{\max\{k_l,V_r\}+\min\{k_o,V_o\}}{2}}$}& $x^*_o=2b_R-\max\{k_l,V_r\}$& \textit{Policy}: $x_R=2b_R-\max\{k_l,V_r\}$\\
		& & $m^*_o=0$& \textit{Rent}: $m_R=0$\\
		\hline
		
		\multirow{4}{*}{\quad\textit{7}}& \multirow{2}{*}{$\quad\enspace\qquad\boldsymbol{V_o\leq0<V_r}$
		}& $x^*_l=0$& \textit{Party}: $P=R$\\
		& \multirow{1.9}{*}{\qquad\qquad\enspace\quad\textit{and}}& $x^*_r=\min\{k_r,V_r\}$& \textit{Politician}: $c_R=r$\\
		& \multirow{2}{*}{$\boldsymbol{\frac{\max\{k_l,V_o\}+\min\{k_r,V_r\}}{2}\leq b_R}$}& $x^*_o=0$& \textit{Policy}: $x_R=\min\{k_r,V_r\}$\\
		& & $m^*_o=0$& \textit{Rent}: $m_R=0$\\
		\hline

		\multirow{4}{*}{\quad\textit{8}}& \multirow{2}{*}{$\quad\enspace\qquad\boldsymbol{V_o\leq0<V_r}$
		}& $x^*_l=0$& \textit{Party}: $P=R$\\
		& \multirow{1.9}{*}{\qquad\qquad\enspace\quad\textit{and}}& $x^*_r=2b_R-\max\{k_l,V_o\}$& \textit{Politician}: $c_R=r$\\
		& \multirow{2}{*}{$\boldsymbol{b_R<\frac{\max\{k_l,V_o\}+\min\{k_r,V_r\}}{2}}$}& $x^*_o=0$& \textit{Policy}: $x_R=2b_R-\max\{k_l,V_o\}$\\
		& & $m^*_o=0$& \textit{Rent}: $m_R=0$\\
		\hline
		
		\multirow{5}{*}{\quad\textit{9}}& \multirow{2}{*}{$\enspace\quad\boldsymbol{0<V_r<b_R\leq V_o}\enspace\textit{or}$
		}& $x^*_l=0$& \textit{Party}: $P=R$\\
		& \multirow{2}{*}{$\quad\quad\boldsymbol{0<V_r<V_o<b_R,}$}& $x^*_r=0$& \textit{Politician}: $c_R=o$\\
		& \qquad\qquad\quad\enspace\multirow{2}{*}{\textit{and}}& $x^*_o=\min\{b_R,V_o\}$& \textit{Policy}: $x_R=\min\{b_R,V_o\}$\\
		& \multirow{2}{*}{$\enspace\qquad\qquad\boldsymbol{\bar{x}<b_R}$}& $m^*_o=\abs{\min\{b_R,V_r\}-b_R}$& \textit{Rent}: $m_R=\abs{\min\{b_R,V_r\}-b_R}$\\
		& & $\quad\quad\enspace-\abs{\min\{b_R,V_o\}-b_R}$ & $\qquad\qquad\quad\enspace-\abs{\min\{b_R,V_o\}-b_R}$\\
		\hline
		
		\multirow{4}{*}{\enspace\textit{10}}& \multirow{2}{*}{$\quad\enspace\quad\boldsymbol{0<V_o<V_r,b_R}$
		}& $x^*_l=0$& \textit{Party}: $P=R$\\
		& \multirow{1.9}{*}{\qquad\qquad\enspace\quad\textit{and}}& $x^*_r=\min\{b_R,V_r\}$& \textit{Politician}: $c_R=r$\\
		& \multirow{2}{*}{$\qquad\enspace\qquad\boldsymbol{\bar{x}< b_R}$}& $x^*_o=0$& \textit{Policy}: $x_R=\min\{b_R,V_r\}$\\
		& & $m^*_o=0$& \textit{Rent}: $m_R=0$\\

		\pagebreak
		
		\multirow{4}{*}{\enspace\textit{11}}& \multirow{4}{*}{$\quad\enspace\qquad\boldsymbol{b_R\leq V_o,V_r}$
		}& $x^*_l=0$& \textit{Party}: $P=R$\\
		& & $x^*_r=b_R$& \textit{Politician}: $c_R=r$\\
		& & $x^*_o=0$& \textit{Policy}: $x_R=b_R$\\
		& & $m^*_o=0$& \textit{Rent}: $m_R=0$\\
		\hline
	\end{longtable}
	\end{center}
where $\bar{x}=\frac{\min\{b_R,V_r\}+\min\{b_R,V_o\}}{2}$, and $m^*_l=m^*_r=0$ for all cases.
\end{theorem}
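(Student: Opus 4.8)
The plan is to solve the game by backward induction through the three stages of the timing, just as in the proof of Theorem \ref{Thm1}, but now tracking party $R$'s choice between the two offers it receives. First I would fix, for each candidate $c_R\in\{r,o\}$ that $R$ might endorse, the outcome of the election using the Observation and Lemma \ref{Lemma1}: the sign of $V_{c_R}$ decides whether $\left(R,c_R,\cdot\right)$ or $\left(L,l,\cdot\right)$ wins, and the corresponding winning-policy interval ($[0,V_{c_R}]$ for $R$ when $V_{c_R}\geq0$, or $[V_{c_R},0]$ for $L$ when $V_{c_R}<0$). This immediately partitions the plane according to the signs and relative order of $V_r$, $V_o$ and $0$, which is what produces the block structure of Table \ref{tab:eq2}: both gaps negative in Cases 1--2, a vanishing gap in Cases 3--4 (where Corollary \ref{cor1} forces the Hotelling outcome), exactly one positive gap in Cases 5--8, and both positive in Cases 9--11.

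Within each block I would compute the optimal offers exactly as in Theorem \ref{Thm1}. A politician whose match wins places her policy as close to her bliss point as the winning interval allows, giving the familiar $\max\{k_l,\cdot\}$, $\min\{k_r,\cdot\}$, $\min\{k_o,\cdot\}$ (and, once the interval reaches past $b_R$, the additional cap $\min\{b_R,\cdot\}$ forced by $R$'s own preference); a politician whose match must lose is indifferent among all offers and receives the centrist convention $x^*=0$, which accounts for every $x_l^*=0$, $x_r^*=0$, $x_o^*=0$ entry. The genuinely new ingredient is the outsider's rent. Since $o$ makes a single take-it-or-leave-it platform offer and $R$ fully internalizes the cost $c=m_o$, whenever $o$ is endorsed she drives $R$ down to its reservation value, namely the utility $R$ would get by endorsing $r$ instead. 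Imposing the participation constraint $-\abs{x_o-b_R}-m_o=U_R^{\mathrm{alt}}$ with equality and solving for $m_o$ reproduces the rent formulas of Cases 5 and 9, while the requirement $m_o\geq0$ is precisely what the midpoint inequalities encode.

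The core of the argument is thus the comparison that fixes $R$'s endorsement together with its reservation value, and this is where the thresholds $\tfrac{\max\{k_l,V_r\}+\min\{k_o,V_o\}}{2}\lessgtr b_R$, $\tfrac{\max\{k_l,V_o\}+\min\{k_r,V_r\}}{2}\lessgtr b_R$ and $\bar{x}\lessgtr b_R$ operate. I would argue that each advantaged candidate would like to sit at her constrained ideal, but $R$ endorses her only if the induced policy stays on the near side of $b_R$ relative to the alternative; the midpoint of the two competing policies crossing $b_R$ is exactly the point at which the binding constraint flips from ``reach the ideal policy and take the residual surplus as rent'' (Cases 5, 7, 9) to ``be pushed to the reflected policy $2b_R-(\text{opponent's policy})$ with zero rent'' (Cases 6, 8). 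In the two-positive Cases 9--11 this becomes a Bertrand-style competition between $r$ and $o$ for the endorsement, won by whoever can deliver $R$ a policy strictly nearer $b_R$: that is why $o$ prevails in 9--10 and why in Case 11 the insider $r$ can secure $R$'s bliss point $b_R$ at zero rent, which $o$ cannot strictly undercut.

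I expect the main obstacle to be the simultaneous, mutually dependent determination of three objects: $l$'s policy, the reservation value $R$ draws from its alternative candidate, and the outsider's rent. Because that alternative is itself an equilibrium outcome (endorse the other candidate, against which $l$ best-responds with $\max\{k_l,V_r\}$ or $\max\{k_l,V_o\}$), the ``threat policy'' entering the participation constraint must be pinned down consistently with $l$'s own incentives. The second delicate point is the pervasive indifference: at the reflected-policy/zero-rent boundary $R$ is exactly indifferent between its two offers, so sustaining the stated platform requires the indifference conventions (the party's tie-break toward the insider, the politicians' randomization) together with a limiting argument for $o$'s offer, which admits no exact maximizer on the open side of the constraint. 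Working through these boundaries case by case, and checking $m_o\geq0$ via the midpoint inequalities, is the bulk of the labor; the remaining verifications are routine substitutions of the same kind performed in Theorem \ref{Thm1}.
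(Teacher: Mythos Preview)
Your plan coincides with the paper's proof: backward induction through the stages, partition by the signs and relative order of $V_r$ and $V_o$, Lemma~\ref{Lemma1} to fix the winning interval for each potential endorsement, the closest-to-bliss-point rule within that interval, the participation-constraint derivation of $m_o$ against $R$'s reservation payoff from the alternative candidate, the midpoint test as the nonnegativity check on $m_o$, and the reflected policy $2b_R-(\cdot)$ with zero rent when that test fails. The paper uses the same limiting-$\varepsilon$ device at the indifference boundaries that you anticipate, and in Case~3 the factor $2$ in the rent comes exactly from the voter's coin flip you allude to under the indifference conventions.

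One concrete slip: you assert that ``$o$ prevails in 9--10.'' In Case~10 the hypothesis is $0<V_o<V_r,b_R$, so by your own Bertrand logic it is the \emph{insider} $r$ who can place a policy strictly nearer $b_R$; Table~\ref{tab:eq2} accordingly records $c_R=r$ and $x_R=\min\{b_R,V_r\}$ there. This contradicts your sentence but not your method, so it is a misstatement to correct rather than a gap in the argument.
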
 

Note that while some cases appear to be missing, they are not included in the table because they have no solution. For instance, the complement of Case 10 is $0<V_o<V_r,b_R$ and $b_R<\frac{\min\{b_R,V_r\}+\min\{b_R,V_o\}}{2}$. In this case, the intuitive optimal policies of $r$ and $o$ are those closest to $b_R$ within their grasp, since they stand the highest chance to please $R$,\footnote{The preference of the electorate is already given for $0<V_o,V_r$, so politicians need worry only about nomination.} so $x_r=\min\{b_R,V_r\}$ and $x_o=\min\{b_R,V_o\}$. However, since $x_o,x_r< b_R$, it is impossible for condition $2b_R<\min\{b_R,V_r\}+\min\{b_R,V_o\}$ to hold. The same reasoning applies to the absent cases.

\begin{proof}\ \\
We refer the reader to Appendix \ref{AppB}. 
\end{proof}

Roughly speaking, in Cases 1 and 2, neither $r$ nor $o$ stands a chance of winning, so $R$ picks the lesser of two evils as damage control. In Cases 3 and 4, $R$'s best scenario is a tie, and it is accomplished through one politician only. If possible, the politician seeks a positive rent for it. In Cases 5 through 8, only one politician has the necessary valence advantage to win surely. While $R$ is aware of the possibility of victory with her, it simply does not accept every platform offer she makes. If the demands of the prospective winner -either in the form of policy, rent, or both- outweigh the benefits of winning, then $R$ might prefer to endorse the other politician and lose the election altogether. In general, we say that the seemingly losing candidate, or the \textit{option}, has a \textit{balancing effect} if her most attractive platform offer has a constraining effect on that of the most advantaged politician. If it does not, then the prospective winner is able to propose the policy closest to her bliss point. In the aforementioned cases, the prospective winner always prefers to win herself than to let the other politician lose the election to the opposition. This follows from the fact that $k_r>0$ and $k_o\geq0$, so a non-negative policy is preferred to a negative one.  Cases 9 to 11 are similar in spirit, except for the fact that both politicians stand a chance of winning the election, so instead of proposing a policy close to their own bliss point, they propose the closest to $b_R$. Depending what the most attractive offers of candidates are, and their distance with respect to $b_R$, the balancing effect may or may not be observed.

\section{Discussion}\label{discussion}

Let us first study the implications of introducing an outsider to the game in terms of the winning policy. 

\begin{corollary}\ \\
If $V_r<0$, then introducing an outsider can preserve or make the winning policy more centrist. If $V_r=0$, then the policy remains the same or becomes more extreme. If $V_r>0$, then the effect on the policy is mixed, and ranges from preserving it to making it more extreme. The result depends on which politician (insider or outsider) has the upper hand, and her extremism relative to that of the party.
\end{corollary}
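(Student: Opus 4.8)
The plan is to treat this as a direct, case-by-case comparison between the winning policy of the insider-only game (Theorem \ref{Thm1}, Table \ref{tab:eq1}) and that of the game with an outsider (Theorem \ref{Thm2}, Table \ref{tab:eq2}), holding the primitives $\nu_l,\nu_r,\nu_o,\alpha_L,\alpha_R,k_l,k_r,k_o,b_R$ fixed and partitioning according to the sign of $V_r=\alpha_R\nu_r-\alpha_L\nu_l$. The benchmark winning policy is read straight off Table \ref{tab:eq1}: it is $\max\{k_l,V_r\}\le 0$ when $V_r<0$, it is $0$ when $V_r=0$, and it is $\min\{k_r,V_r\}\ge 0$ when $V_r>0$. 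Since ``more centrist'' means smaller in absolute value and ``more extreme'' means larger, the whole argument reduces to comparing $|x^{\mathrm{ins}}|$ with $|x^{\mathrm{out}}|$ in each activated row of Table \ref{tab:eq2}, the row being selected by $V_o$ together with $k_o$ and $b_R$.

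First I would dispatch $V_r=0$, which is the cleanest and which fixes ideas. The benchmark is $0$, so no outcome can be strictly more centrist, and introducing the outsider activates Case 4 when $V_o\le 0$ (policy $0$, hence preserved) or Cases 5--6 when $V_o>0$ (policy $\min\{k_o,V_o\}>0$ or $2b_R-\max\{k_l,V_r\}>0$, hence strictly more extreme); this is exactly ``remains the same or becomes more extreme.'' Next, for $V_r<0$ I would isolate the sub-regime in which $L$ retains the election, i.e. $V_o\le 0$ (Cases 1--3). Here the only channel through which the outsider matters is that it sharpens the valence threshold $L$ must beat: by Lemma \ref{Lemma1} the binding disadvantage becomes the larger of $V_r,V_o$, so $l$'s equilibrium offer moves from $\max\{k_l,V_r\}$ to $\max\{k_l,V_o\}$ (Case 1), or collapses to $0$ at a tie (Case 3), or is unchanged (Case 2). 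Because $V_o\ge V_r$ in Case 1 and both quantities are nonpositive, $|\max\{k_l,V_o\}|\le|\max\{k_l,V_r\}|$, and in every instance $|x^{\mathrm{out}}|\le|x^{\mathrm{ins}}|$, i.e. preserved or more centrist.

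For $V_r>0$ I would run the analogous argument through the balancing-effect mechanism of Cases 7--11. The losing ``option'' constrains the prospective winner so that, depending on whether her most attractive acceptable platform lies inside or outside the interval around $b_R$ that keeps her preferable to the option, the winning policy is either her unconstrained ideal $\min\{k_r,V_r\}$ (preserved, Case 7) or is pulled toward $b_R$, taking the form $\min\{b_R,V_r\}$, $b_R$, or $2b_R-\max\{k_l,V_o\}$ (Cases 8, 10, 11). Comparing these pulled-in values with the free optimum $\min\{k_r,V_r\}$, the direction of the change is governed entirely by $\operatorname{sign}(k_r-b_R)$: when the endorsed insider is more extreme than her party the balancing effect moderates her, and when she is less extreme it drags her outward, more extreme. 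This is precisely the verbal claim that the result ``depends on which politician has the upper hand, and her extremism relative to that of the party,'' and reading it off yields the ``mixed'' conclusion; I would state it regime by regime rather than as a single inequality.

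The step I expect to be the real obstacle is the winner-flip configurations, namely $V_r<0$ with $V_o>0$ (Cases 5--6), where the winning policy jumps from a leftist $\max\{k_l,V_r\}$ to a rightist $\min\{k_o,V_o\}$ or $2b_R-\max\{k_l,V_r\}$, and the within-$R$ flip to $o$ under $V_r>0$ (Case 9). In these the absolute-value comparison is genuinely non-monotone in the primitives and is not controlled by one inequality, which is exactly why the statement is hedged with ``can.'' I would handle them by writing each outsider policy as a displacement from $b_R$ and using the defining inequality of the active case, e.g. $\tfrac12\bigl(\max\{k_l,V_r\}+\min\{k_o,V_o\}\bigr)$ versus $b_R$, to pin down which of the two forms obtains, then reading the direction of the magnitude change off the resulting expression; the honest upshot is that the monotone Cases 1--4 and Case 7 carry the bulk of the clean statement, while the flip cases are the delicate remainder that must be argued configuration by configuration.
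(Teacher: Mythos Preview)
Your approach is exactly the one the paper takes: the paper's own proof is literally the single sentence ``The proof is straightforwardly obtained comparing the policy $x_P$ of the winning match $(P,c_P,(x_P,m_P))$ given in Theorems~\ref{Thm1} and~\ref{Thm2} for each valence condition case.'' Your plan is a fully fleshed-out execution of that comparison, and your case analysis (including the observation that the winner-flip configurations in Cases~5--6 and~9 are where the magnitude comparison stops being monotone, which is why the statement is hedged with ``can'') is correct and in fact more careful than what the paper provides.
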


\begin{proof}\ \\
The proof is straightforwardly obtained comparing the policy $x_P$ of the winning match $(P, c_P, (x_P,m_P))$ given in Theorems \ref{Thm1} and \ref{Thm2} for each valence condition case.
\end{proof}

The corollary shows that conditional on $R$'s insider losing the election, that is $V_r<0$, introducing an outsider may reduce the winning policy's polarization. Indeed, by bringing in a more attractive politician in terms of relative valence (if not sufficiently so for winning), $R$ forces the opposition to propose a more centrist policy to counteract the outsider's appeal. If on the other hand, the outsider is less attractive than its insider, then $R$ simply does not endorse her, and the outcome is the same as that of the insiders only game.\footnote{Just as in $V_o\leq V_r<0$, this is also true for the cases $V_o\leq 0=V_r$, and $V_o\leq0<V_r$ where $\frac{\max\{k_l,V_o\}+\min\{k_r,V_r\}}{2}\leq b_R$.} If the insiders of both parties are equally perceived by voters, so $V_r=0$, then introducing a more appealing politician may actually make the policy more extreme. This is because the electorate's favor allows the outsider to get away with a policy close to her ideal point, which may well be non-centrist. If the insider stands a chance of winning and the outsider does not, then the balancing effect makes the policy less radical. If both candidates can potentially be elected, $0<V_o,V_r$, then in general the competition to please $R$ leads to more extreme policies than in the insider only game. Exceptions occur when the insider is in a position to make the closest policy offer to $R$, that is when $b_R\leq V_r$, and the party has a less extreme bliss point than $r$ does, $b_R<k_r$.

While for the insiders only model we obtain a similar result to Serra (2018), namely that the valence advantaged candidate sways the winning policy close to her bliss point (as close as her advantage allows), by introducing an outsider and competition for nomination, we obtain novel results. Notably, we show that under certain conditions, the winning policy may in fact become more centrist, even in the presence of relative valence advantage. The key aspect for deriving the result is the competition induced by the outside option, whether that is the party's insider or the outsider herself. This potentially limits policy swaying or excessive rent-seeking on the advantaged candidate's behalf. 

We now consider the effect of introducing an outsider to the game on the welfare of voters defined as follows. Let $U_M(P^i,c^i_P,(x^i_P,m^i_P))$ denote the utility the median voter $M$ derives from the winning match in the insiders-only game (hence \textit{i}), and let $U_M(P^o,c^o_P,(x^o_P,m^o_P))$ be the utility she derives from the winning match in the game with the outsider (\textit{o}). We say that there is a negative effect on the voters' welfare if $U_M(P^i,c^i_P,(x^i_P,m^i_P))>U_M(P^o,c^o_P,(x^o_P,m^o_P))$, a null effect if $U_M(P^i,c^i_P,(x^i_P,m^i_P))=U_M(P^o,c^o_P,(x^o_P,m^o_P))$, and a positive effect otherwise. In the same manner we define the welfare of party $R$.

\begin{corollary}\label{cor3} \ \\
If $V_r<0$, then introducing an outsider preserves or improves the welfare of voters. If $V_r=0$, then in the majority of cases, voter welfare is maintained or improved, with only one condition bringing about a reduction. If $V_r>0$ the effect on welfare is mixed and it depends on the valence of the outsider and the extremism of the insider. All the cases but one have a null or positive effect on the welfare of the nominating party.
\end{corollary}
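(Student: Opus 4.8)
The plan is to turn the statement into a finite, table-driven comparison, using sincere voting to identify realized voter welfare with the utility of the winning match. By the Observation the winning match is exactly the one the median voter weakly prefers, so the realized $U_M$ is the larger of the two match utilities and, crucially, rents never enter $U_M$. First I would record the three insider-only benchmarks from Theorem \ref{Thm1}: for $V_r<0$ the winner $\left(L,l,(\max\{k_l,V_r\},0)\right)$ gives $\max\{k_l,V_r\}+\alpha_L\nu_l$, which is at most $\alpha_L\nu_l$ since $\max\{k_l,V_r\}\le0$; for $V_r=0$ the tie at the origin gives $\alpha_L\nu_l=\alpha_R\nu_r$; and for $V_r>0$ the winner $\left(R,r,(\min\{k_r,V_r\},0)\right)$ gives $-\min\{k_r,V_r\}+\alpha_R\nu_r$.

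The crux, which I would isolate as a single lemma, is that the outsider game reproduces the insider-only outcome in every case where $R$ does not endorse $o$, and that whenever $o$ does win she wins against $L$'s centrist offer $x_l^*=0$; by the Observation this means precisely $-\abs{x_o^*}+\nu_o>\alpha_L\nu_l$. The regime $V_r\le0$ then follows at once: the benchmark never exceeds $\alpha_L\nu_l$, so each case in which $o$ wins (Cases 5 and 6, and the winning branch of Case 3 of Theorem \ref{Thm2}) strictly raises voter welfare, while the cases that merely reproduce the benchmark or collapse to a tie (Cases 1, 2, 4) give a null effect. This disposes of $V_r<0$ entirely, and under $V_r=0$ it leaves only the boundary subcases (where a nominal victory degenerates to a tie at $x_o^*=V_o$) to be checked individually for the single asserted reduction.

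For $V_r>0$ I would compare the benchmark policy $\min\{k_r,V_r\}$ against the winning policies produced by the balancing effect in Cases 7--11: Case 7 reproduces it (null), whereas Cases 8--11 replace it by $2b_R-\max\{k_l,V_o\}$, $\min\{b_R,V_r\}$, $b_R$, or (for a victorious outsider) $\min\{b_R,V_o\}$. Substituting each into $U_M$ and comparing the resulting $\abs{\cdot}$ with $\min\{k_r,V_r\}$ shows the sign depends jointly on $\nu_o$ and on whether $k_r$ overshoots $b_R$, which is exactly the claimed mixed verdict. For party welfare I would re-run the same comparison for $U_R$, but lead with the structural fact that enlarging $R$'s feasible set and adding a more appealing option can only push $l$ weakly toward the origin when $R$ loses, hence weakly toward $b_R$; so the losing regimes give a null or positive effect.

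The decisive simplification for party welfare is that whenever $o$ wins with a positive rent the rent is calibrated to leave $R$ indifferent to its fallback: in Cases 5 and 9 the telescoping $-\abs{x_R^*-b_R}-m_R^*$ collapses exactly to $R$'s value at its losing/insider alternative, and in the tie of Case 3 the same calibration equates $R$'s expected payoff with that of endorsing $r$ and losing. The main obstacle I anticipate is therefore not any single computation but the bookkeeping needed to (i) verify these collapses from the explicit rent formulas together with the sign facts $\max\{k_l,\cdot\}\le0\le\min\{k_o,\cdot\}$, and (ii) isolate the two lone exceptions the statement promises — the single voter-welfare reduction under $V_r=0$ and the single party-welfare reduction — each of which sits in a degenerate boundary subcase (a win collapsing to a tie, or $r$'s overshoot $\min\{k_r,V_r\}>b_R$ interacting with the balancing constraint).
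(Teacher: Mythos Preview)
Your plan is the paper's: a direct case-by-case comparison of the winning-match utilities from Tables~\ref{tab:eq1} and~\ref{tab:eq2}, with the negative cases enumerated (the paper simply says ``straightforward'' and defers the listing to Appendix~\ref{AppC}). Your two organizing observations---that a strict $o$-win against $x_l^*=0$ forces $-\abs{x_o^*}+\nu_o>\alpha_L\nu_l$, and that the calibrated rents in Cases~3, 5 and~9 telescope $R$'s payoff back to its fallback value---are correct and make the bookkeeping cleaner than the paper's bare enumeration.

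The one place your proposal drifts from the paper is in \emph{locating} the exceptions. The paper's Appendix~\ref{AppC} puts the voter-welfare reductions at Cases~6, 9, 10 and~11 (each under an explicit side-condition on $\nu_o$ or $k_r$) and the lone party-welfare reduction at Case~11; none of these is the ``win collapsing to a tie'' subcase you anticipate. In particular, your crux lemma taken at face value would seem to preclude any voter-welfare loss in Case~6 when $V_r\le 0$, since there $o$ beats $x_l^*=0$ with $x_o^*=2b_R-\max\{k_l,V_r\}<\min\{k_o,V_o\}\le V_o$. So rather than searching the tie boundary, you should compute Cases~6 and~11 explicitly against the insider-only benchmark and reconcile your lemma with the paper's stated side-conditions; that reconciliation (or the discovery that one of the listed conditions is vacuous in the relevant regime) is where the actual content of the exceptional cases lives.
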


\begin{proof}\ \\
The proof is straightforward. The cases where the effect on voter/party welfare is negative are listed in Appendix \ref{AppC}.
\end{proof}

There are a few points worth stressing. First, introducing an outsider may have a positive effect on voter welfare, particularly if competition with the insider is such that the balancing effect limits both policy polarization and rent-seeking; this occurs under certain conditions in Cases 6 and 8. This result is slightly similar to that of van Weelden (2013), where in equilibrium, elected candidates implement a policy slightly close to their bliss point and also engage in less than full rent seeking. However, the incentives for doing so are very different from those of our work. In van Weelden's model, the behavior is induced through the electorate's double threat: not re-electing the politician and voting for the opposition. Evidently, we do not have a re-election dynamics that allows for comparison, and while it is true that all politicians in our model are policy aware, they do not consider what happens in the period that follows the term in office. Our result is instead derived from the party's threat that it will run with its insider if the outsider asks for too much. Note also that there are other scenarios in which voter welfare increases, which correspond to having a competitive if not successful candidate against the opposition, and to a more moderate or less advantaged contender. The second point to note is that in nearly all the cases, the party's welfare is either preserved or increased. Notably, this is true even if the outsider engages in full rent, which is the positive amount that equalizes the utilities of the two platform offers. Loosely speaking, an outsider represents at best a brand new possibility to win, which as it turns out is better than losing, even if there is a price to pay. 

A last general remark is due. The results in both corollaries are independent of $k_o$, which implies that they hold for all types of outsiders: centrists, moderates and radicals. The reasoning is that more than ideology, what limits a politician is her relative valence advantage; the lesser her perception, the less room for platform negotiation. Ultimately, this means that partisan affect (and not extremism) is the variable that has a more significant effect on the welfare or voters and parties. 

\section{Conclusion}\label{conclusion}

With this work we add to the incipient literature on political outsiders. Ours is a theoretical approach aiming to elucidate under what conditions having such political figures is desirable for voters and parties. To this end, we expand Fowler and Hall's (2014) definition of partisan incumbency advantage, and allow broadly for partisan advantage (or disadvantage), so voters have some preference for each party, not just the incumbent. Our main result is that partisan affect has a higher effect on voter and party welfare than the extremism of the outsider. We also show that under certain conditions, outsiders may reduce swaying of the winning policy away from the median, which to our knowledge of the literature is a novel result. We also corroborate Serra's (2018) result on policy polarization under popular discontent. Lastly, we find that in some cases, the nominating party may be willing to pay the highest credibility cost which is full rent, in order to win the election.

There is a distinctive concept on the literature of neophytes that has remained unaccounted for in our work. Notably, we made no specific definition of an outsider other than being an individual with no prior political experience. Nonetheless, other authors distinguish between different types of amateurs based on their previous experience, whether outside or inside of politics (Roberds and Roberts, 2002; Green and Krasno, 1988; Porter and Treul, 2020). Adding this dimension to our analysis may bring about a change in the utility function of voters. To be specific, the transferable skills from certain professional backgrounds may be desirable for negotiating and implementing policies. This is related to Adams \textit{et al.}'s (2011) deconstruction of candidate valence into two types: that valued by voters for campaigning and winning elections, and that valued for ruling. While we have accounted for the former, we did not address candidate competence or dedication. In addition to modifying the utilities in the model, this approach may enable us to endogenize the outsider's decision to engage or not in politics. Apart from these considerations, we consider our work to be a good starting point for discussing the relation between partisan affect and the entry of outsiders to the political arena, and its implications for voter welfare.

\noindent \textbf{Declaration of interests}\ \\
The author declares that she has no known competing financial interests or personal relationships that could have appeared to influence the work reported in this paper.

\begin{appendices}
	\section{}\label{AppA}
	\textbf{Lemma 1.}
		\textit{In the presence of relative valence advantage $V_{c_R}\geq0$, party $R$ accepts policies $x_R$ in $[0,V_{c_R}]$, and $L$ accepts any $x_L\leq0$. The match $\left(L,l,(x_L,0)\right)$ loses the election to $\left(R,c_R,(x_R,m_R)\right)$, with $m_R\geq0$. In the presence of relative valence disadvantage $V_{c_R}<0$, $L$ accepts policies $x_L$ in $[V_{c_R},0]$, and $R$ accepts any $x_R\geq0$. In this case, the match $(R,c_R,(x_R,m_R))$ with $m_R\geq0$, loses the election to $(L,l,(x_L,0))$.}
		
	\begin{proof}\ \\
	Suppose that $V_{c_R}\geq0$, by definition this implies $\alpha_R\nu_R\geq\alpha_L\nu_L$, which means that $R$'s candidate has a valence advantage over $L$'s.\footnote{Strictly speaking, $c_R$ enjoys a clear-cut advantage when $V_{c_R}>0$.} This in turn implies that the candidate $c_R$ has a wider interval for successful policy proposing than $c_L$. To see why, suppose that $c_L$ and $c_R$ were to set an equidistant policy about the origin, so $x_L=-d$ and $x_R=d$, with $d>0$. Then, voters would be faced with the following utilities from electing each match: 
	\begin{eqnarray*}
		U_M\left(L,c_L,(-d,m_L)\right)&=&-d+\alpha_L\nu_L\\
		U_M\left(R,c_R,(d,m_R)\right)&=&-d+\alpha_R\nu_R.
	\end{eqnarray*}
	
	Since $\alpha_R\nu_R\geq\alpha_L\nu_L$, we have $U_M\left(L,c_L,(-d,m_L)\right)\leq U_M\left(R,c_R,(d,m_R)\right)$. Hence, if both candidates were to propose symmetric policies, then the valence advantage of $c_R$ would make her win the election. This further implies that in order to increase its competitiveness, $c_L$ has to propose a more centrist position than $c_R$ does. This can be seen geometrically in Figure \ref{fig:equilibria}. 
	
	\begin{figure}[h!]
		\centering
		\caption{Winning policies}
		\includegraphics[width=8cm]{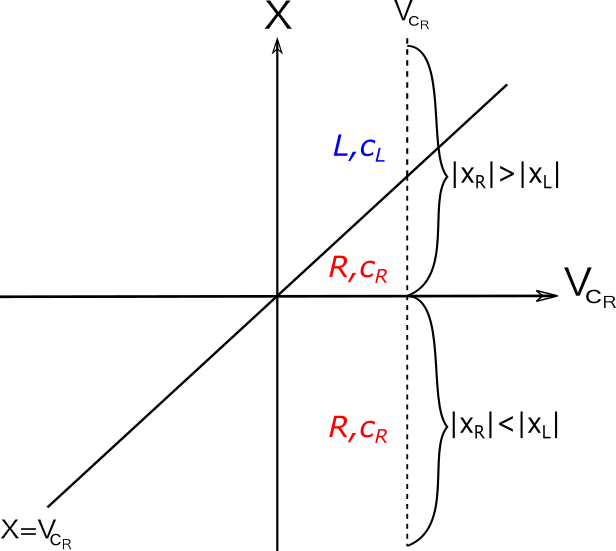}
		\label{fig:equilibria}
	\end{figure}
	
	Note that for a given value of $V_{c_R}\geq0$, if $\abs{x_R}<\abs{x_L}$, that is, if $c_L$ proposes a more radical policy than $c_R$ does, then $X<0$. This makes the overall utility of voters fall in the fourth quadrant of the $V_{c_R}X$ plane, which is where voters prefer the match $\left(R,c_R,(x_R,m_R)\right)$ over $\left(L,c_L,(x_L,m_L)\right)$. Likewise, if $\abs{x_L}=\abs{x_R}$, then overall utility falls on the $V_{c_R}$ axis, which is favorable to party $R$. It is only when $\abs{x_L}<\abs{x_R}$, that the region in which voters prefer the match $\left(L,c_L,(x_L,m_L)\right)$ can be reached.
	
	It is in $L$'s interest that the value of $X=\abs{x_R}-\abs{x_L}$ is high, as it brings the overall utility of voters to the region where its match is preferred. Indeed, the policy $x_L=0$ maximizes the value of $X$. Let us assume for now that this policy is in fact proposed to, and accepted by $L$, then $X=\abs{x_R}$. Note that any $0< x_R\leq V_{c_R}$ gives $R$ the victory,\footnote{Victory with certainty is achieved as long as $x_R<V_{c_R}$, so for the upper bound of the interval we take $x_R=V_{c_R}-\varepsilon$ with $\varepsilon>0$ infinitesimally small. For the assertion, we take the limit as $\varepsilon\rightarrow0$, and write $x_R$ in $[0,V_{c_R}]$.} and any $V_{c_R}<x_R$ grants $L$ the election. This ultimately implies that $R$ accepts a policy offer $x_R$ if $x_R$ in $[0,V_{c_R}]$.
	
	What is left to show is that $L$ accepts any $x_L\leq0$. This follows easily from the fact that $L$ is bound to lose the election even with its most centrist policy $x_L=0$. Since $L$ can do no better than this (recall that $x_L$ is required to be non-positive), it accepts any $x_L$ in $[0,\infty)$ knowing it will lose the election with certainty regardless. An analogous reasoning shows that if $V_{c_R}<0$, then $R$ accepts any $x_R\geq0$, and $L$ accepts $x_L$ in $[V_{c_R},0]$.
	\end{proof}

	\section{}\label{AppB}
	\textbf{Theorem 2.}
	\textit{A game consisting of insiders $l$ and $r$, and outsider $o$, has the equilibrium offers and winning match stated in Table \ref{tab:eq2}.}
	
	\begin{proof}\ \\
		
	\begin{enumerate}
		\item If $\boldsymbol{V_r\leq V_o<0}$, then by Lemma \ref{Lemma1}, $R$ is meant to lose the election, regardless of the politician it chooses to run with. Recall that in the proof of Theorem \ref{Thm1}, we adopted the convention that a politician bound to lose the nomination/election proposes a centrist policy; we extend this convention, and assume that a (prospective) losing politician proposes a null platform, so $(x_r,m_r)=(x_o,m_o)=(0,0)$. 
		
		The important point to note is that if $R$ were to run with $r$, then the winning policy of $L$ would be in $[V_r,0]$, and if it were to run with $o$ instead, then it would be in $[V_o,0]$, this again by Lemma \ref{Lemma1}. As a rule of thumb, $R$ picks the candidate that limits the range of successful policy proposing for the opposition, and since $[V_o,0]\subseteq[V_r,0]$, $R$ runs with $o$. Analysis similar to that in the proof of Theorem \ref{Thm1} shows that within $[V_o,0]$, $l$ proposes the policy closest to her bliss point, which is $x_l=\max\{k_l,V_o\}$. This is the result.
		
		\item The proof for $\boldsymbol{V_o\leq V_r<0}$ is analogous to that of Case 1.
		
		\item If $\boldsymbol{V_r<0=V_o}$ then by Lemma \ref{Lemma1}, and by an argument similar to that of Case 1, $R$ runs with $o$. By convention, $r$ proposes the platform $(x_r,m_r)=(0,0)$. Since $V_o=0$, it follows from Corollary \ref{cor1} that $x_o=0$. Knowing that the candidate $o$ runs with a centrist policy, $l$ has no option but to propose $0$ as well, as any other policy $x_l$ in $(-\infty,0)$ would lead to a sure loss. The rent sought by $l$ is, by definition, $m_l=0$.
		
		What is left is to show that $m_o=2\abs{\max\{k_l,V_r\}}$. First, note that $U_M(L,l,(0,0))=U_M(R,o,(0,m_o))=0$ for all $m_o\geq0$. Hence, by the indifference assumption of voters, the electorate chooses either match with equal probability. Therefore, the expected value of the outsider's platform offer $(x_o,m_o)=(0,m_o)$ to $R$ is $\mathbb{E}[U_R(0,m_o)]=\frac{1}{2}\left(-\abs{0-b_R}-m_o-\abs{0-b_R}\right)=-b_R-\frac{m_o}{2}$. The outsider chooses $m_o$ in such a way, that $R$ is (almost) indifferent between the aforementioned expected value, and the value of accepting $r$'s offer $(x_r,m_r)=(0,0)$, $U_R(0,0)=-\abs{\max\{k_l,V_r\}-b_R}=\max\{k_l,V_r\}-b_R$.\footnote{Recall that in this scenario, the fact that $V_r<0$ means that $(R,r,(0,0))$ loses to $(L,l,(\max\{k_l,V_r\},0))$ with certainty.} It is easy to check that $m_o=2\abs{\max\{k_l,V_r\}} -\varepsilon$ with $\varepsilon>0$ infinitesimally small, makes $R$ prefer the offer of $o$. For the result, we take the limit as $\varepsilon\rightarrow0$.
		
		\item The proof for $\boldsymbol{V_o\leq0=V_r}$ is analogous to that of Case 3, except $r$ has no leeway for proposing a positive rent. Note that while $R$ may receive two null platform offers, the indifference assumption of parties makes it accept the offer of $r$.
		
		\item If $\boldsymbol{V_r\leq0< V_o}$ and $\boldsymbol{\frac{1}{2}\left(\max\{k_l,V_r\}+\min\{k_o,V_o\}\right)\leq b_R}$, then just as in Theorem \ref{Thm1}, $r$ and $l$ propose $(x_l,m_l)=(x_r,m_r)=(0,0)$. $R$ accepts $o$'s platform proposal $(x_o,m_o)$ so long as two things occur. First, $x_o$ in $[0,V_o]$, this follows from Lemma \ref{Thm1}. Second, the rent sought by the outsider $m_o$ is such that $U_R(x_o,m_o)> U_R(x_r,0)$, otherwise the indifference assumption of parties would make $R$ nominate $r$ instead. Note that $U_R(x_r,0)=-|\max\{k_l,V_r\}-b_R|$, since endorsing $r$ leads to a sure loss.
		
		Within the interval $[0,V_o]$, the policy closest to to the outsider's bliss point is $x_o=\min\{k_o,V_o\}$. It is easy to check that $m_o=\abs{\max\{k_l,V_r\}-b_R}-\abs{\min\{k_o,V_o\}-b_R}-\varepsilon$ with $\varepsilon>0$ infinitesimally small, is such that $U_R(\min\{k_o,V_o\},\abs{\max\{k_l,V_r\}-b_R}-\abs{\min\{k_o,V_o\}-b_R})>U_R(x_r,0)$ with $x_r\geq0$. For the assertion, we take the limit as $\varepsilon\rightarrow0$ and write $m_o=\abs{\max\{k_l,V_r\}-b_R}-\abs{\min\{k_o,V_o\}-b_R}$.
		
		What is left to show is that $m_o$ does not violate the domain constraint $m_o\geq0$, this amounts to showing that $\abs{\max\{k_l,V_r\}-b_R}\geq\abs{\min\{k_o,V_o\}-b_R}$. It is a simple matter to check that the valence condition $\max\{k_l,V_r\}+\min\{k_o,V_o\}\leq 2b_R$, ensures just that. This is the result.
		
		\item If $\boldsymbol{V_r\leq0<V_o}$ and $\boldsymbol{b_R<\frac{1}{2}\left(\max\{k_l,V_r\}+\min\{k_o,V_o\}\right)}$, then $o$ is the only politician that ensures $R$ a victory, so as long as her platform offer $(x_o,m_o)$ is beneficial to $R$, at least more so than losing to the match $(L,l,(\max\{k_l,V_r\},0))$. We also have $(x_l,m_l)=(x_r,m_r)=(0,0)$. However, a reasoning similar to the above where $x_o=\min\{k_o,V_o\}$ and $m_o=\abs{\max\{k_l,V_r\}-b_R}-\abs{\min\{k_o,V_o\}-b_R}$, will not hold because the condition $2b_R<\max\{k_l,V_r\}+\min\{k_o,V_o\}$ yields a negative value of $m_o$. 
		
		Note that the condition $2b_R<\max\{k_l,V_r\}+\min\{k_o,V_o\}$ implies that the outsider's desired policy $\min\{k_o,V_o\}$ is too extreme for $R$ to accept. In fact, the most radical policy $R$ is willing to back up is $2b_R-\max\{k_l,V_r\}-\varepsilon$ with $\varepsilon>0$ infinitesimally small. This is indeed symmetric to $\max\{k_l,V_r\}$ about $b_R$, as Figure \ref{fig:utR} shows. Given that $\max\{k_l,V_r\}<2b_R-\max\{k_l,V_r\}<\min\{k_o,V_o\}$, $o$ derives a greater utility running with $R$ and implementing policy $x_o=2b_R-\max\{k_l,V_r\}$, than letting $r$ run instead, and losing the election to $l$, this is also shown in Figure \ref{fig:utR}. Note that this is true, even when the rent sought by the outsider is null. Thus, $o$ proposes $(2b_R-\max\{k_l,V_r\},0)$ wins nomination, and ultimately, the election. 
		
		
		\begin{figure}[h!]
			\centering
			\caption{Utilities of $o$ and $R$ when $b_R<\frac{1}{2}(\max\{k_l,V_r\}+\min\{k_o,V_o\})$}
			\includegraphics[width=11cm]{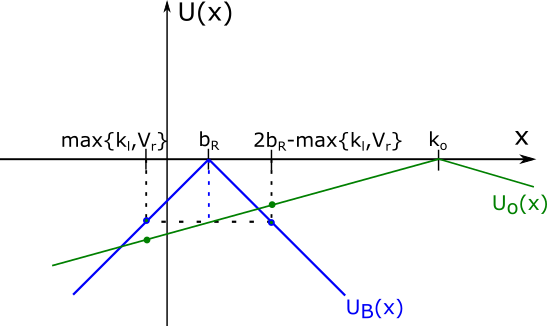}
			\label{fig:utR}
		\end{figure}
		
		\item The proof for $\boldsymbol{V_o\leq 0<V_r}$ and $\boldsymbol{\frac{1}{2}(\max\{k_l,V_o\}+\min\{k_r,V_r\})\leq b_R}$ is similar to that of 5.
		
		\item The proof for $\boldsymbol{V_o\leq 0<V_r}$ and $\boldsymbol{b_R<\frac{1}{2}(\max\{k_l,V_o\}+\min\{k_r,V_r\})}$ is similar to that of 6.
		
		\item If $\boldsymbol{0<V_r<b_R\leq V_o}$ or $\boldsymbol{0<V_r<V_o<b_R}$, and $\boldsymbol{\frac{1}{2}\left(\min\{b_R,V_r\}+\min\{b_R,V_o\}\right)<b_R}$, then both $r$ and $o$ stand a chance of winning the election. By Theorem \ref{Thm1} we have $(x_l,m_l)=(0,0)$. Given that the insider and outsider are in competition for nomination, they propose the policy closest to $R$'s bliss point within their respective ranges, the latter being derived from Lemma \ref{Lemma1}. Thus, $x_r=\min\{b_R,V_r\}$ and $x_o=\min\{b_R,V_o\}$. Note that since $V_r<b_R\leq V_o$ or $V_r<V_o<b_R$, the outsider's policy is always closest to that $R$ prefers. Knowing that $o$ has an advantage, $r$ proposes $(x_r,m_r)=(0,0)$.
		
		Lastly, the fact that $\min\{b_R,V_o\}<2b_R-\min\{b_R,V_r\}$ implies that there is a sufficient difference in the utility $R$ derives from the policies of $r$ and $o$,\footnote{Importantly, for the result we take $r$'s initial offer $x_r=\min\{b_R,V_r\}$ instead of the null.}  for the outsider to seek a positive rent. This is shown in Figure \ref{fig:utR9}. It is a simple matter to check that the rent $m_o=\abs{\min\{b_R,V_r\}-b_R}-\abs{\min\{k_o,V_o\}-b_R}-\varepsilon$ with $\varepsilon>0$ infinitesimally small is such that $R$ still prefers endorsing $o$ over $r$. Taking the limit as $\varepsilon\rightarrow0$ gives the result.
		
		
		\begin{figure}[h!]
			\centering
			\caption{Utility of $R$ when $0<V_r<b_R\leq V_o$ and $\min\{b_R,V_o\}<2b_R-V_r$}
			\includegraphics[width=8cm]{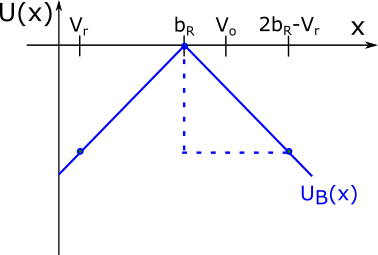}
			\label{fig:utR9}
		\end{figure}
		
		\item The proof for $\boldsymbol{0<V_o<V_r,b_R}$ and $\boldsymbol{\min\{b_R,V_r\}+\min\{b_R,V_o\}<2b_R}$ is a combination of arguments similar to those of Cases 5 and 9.
		
		\item If $\boldsymbol{b_R\leq V_o,V_r}$, then both politicians offer $R$ their most attractive (and attainable) policy, $x_r=x_o=b_R$. The indifference assumption of parties makes $R$ prefer the offer of its insider. Knowing this, the outsider offers the null platform $(x_o,m_o)=(0,0)$.
		
	\end{enumerate}
	\end{proof}

	\section{}\label{AppC}

	\textbf{Corollary 4.2.}
	\textit{If $V_r<0$, then introducing an outsider preserves or improves the welfare of voters. If $V_r=0$, then in the majority of cases, voter welfare is maintained or improved, with only one condition bringing about a reduction. If $V_r>0$ the effect on welfare is mixed and it depends on the valence of the outsider or the extremism of the insider. All the cases but one either maintain or improve the welfare of the party.}
	
	We provide a list of the cases where the introduction of an outsider to the game reduces welfare.
	\begin{itemize}
		\item Case 6. $V_r\leq 0<V_o$ and $b_R<\frac{\max\{k_l,V_r\}+\min\{k_o,V_o\}}{2}$. If $\nu_o$ is such that $\nu_o<\frac{\alpha_R\nu_r+\alpha_L\nu_l}{2}+2b_R$, then $U_M(R,r,(\min\{k_r,V_r\},0))>U_M(R,o,(2b_R,0))$.
		\item Case 9. $0<V_r<b_R\leq V_o$ and $\frac{\min\{b_R,V_r\}+\min\{b_R,V_o\}}{2}<b_R$. If $\nu_o$ and $k_r$ are such that $b_R\leq V_o<b_R-k_r+V_r$ and $k_r<V_r<b_R$ then $U_M(R,r,(\min\{k_r,V_r\},0))>U_M(R,o,(\min\{b_R,V_o\},\abs{\min\{b_R,V_r\}}-\abs{\min\{b_R,V_o\}-b_R}))$.
		\item Case 10. $0<V_o<V_r,b_R$ and $\frac{\min\{b_R,V_r\}+\min\{b_R,V_o\}}{2}<b_R$. If $\nu_o$ and $k_r$ are such that $k_r<V_o<V_r<b_R$, $V_o<k_r<V_r<b_R$, $k_r<V_o<b_R<V_r$, or $V_o<k_r<b_R<V_r$, then $U_M(R,r,(\min\{k_r,V_r\},0))>U_M(R,r,(\min\{b_R,V_r\},0))$.
		\item Case 11. $b_R\leq V_o,V_r$. If $k_r$ is such that $k_r<b_R\leq V_r$, then $U_M(R,r,(\min\{k_r,V_r\},0))>U_M(R,r,(b_R,0))$.
		On the other hand, if $\min\{k_r,V_r\}\neq b_R$, then $U_R(R,r,(\min\{k_r,V_r\},0))>U_R(R,r,(b_R,0))$.
	\end{itemize}
	
\end{appendices}

\end{document}